\theoremstyle{plain}
\newtheorem{theorem}{Theorem}[section]
\newtheorem{proposition}[theorem]{Proposition}
\theoremstyle{definition}
\newtheorem{definition}[theorem]{Definition}
\theoremstyle{remark}
\newtheorem{remark}[theorem]{Remark}
\theoremstyle{definition}
\newtheorem{example}{Example}
\theoremstyle{lemma}
\begin{document}

\title{Hybrid Polynomial Zonotopes: A Set Representation for Reachability Analysis in Hybrid Nonaffine Systems}

\author{Peng Xie$^{*}$, Zhen Zhang$^{*}$, Amr Alanwar
\thanks{$^{*}$Peng Xie and Zhen Zhang contributed equally to this work. 
Peng Xie, Zhen Zhang, and Amr Alanwar are with the TUM School of Computation, Information and Technology, Department of Computer Engineering, Technical University of Munich, Heilbronn, Germany. 
Email: \{p.xie, zhenzhang.zhang, alanwar\}@tum.de}
}

\maketitle

\begin{abstract}
Reachability analysis for hybrid nonaffine systems remains computationally challenging, as existing set representations—including constrained, polynomial, and hybrid zonotopes—either lose tightness under high-order nonaffine maps or suffer exponential blow-up after discrete jumps. This paper introduces \textbf{Hybrid Polynomial Zonotope (HPZ)}, a novel set representation that combines the mode-dependent generator structure of hybrid zonotopes with the algebraic expressiveness of polynomial zonotopes. HPZs compactly encode non-convex reachable states across modes by attaching polynomial exponents to each hybrid generator, enabling precise capture of high-order state-input couplings without vertex enumeration. We develop a comprehensive library of HPZ operations including Minkowski sum, linear transformation, intersection, and union. Theoretical analysis and computational experiments demonstrate that HPZs achieve superior tightness preservation and computational efficiency for hybrid system reachability analysis.
\end{abstract}

\section{Introduction}
\label{sec:introduction}

Hybrid nonlinear systems refer to hybrid systems where each continuous submodel exhibits nonaffine nonlinear dynamic relationships, meaning the relationships between states and control inputs are nonlinear and do not take affine forms~\cite{mcclamroch2002performance}. Such systems are prevalent in practical engineering applications, including robotic switching control~\cite{borquez2023hamilton,prieur2007hybrid,sun2011stability} and so on. Due to the difficulty in linearizing or simplifying nonaffine subsystems, traditional analysis and design methods developed for linear or affine systems face significant challenges when applied to these systems.

In hybrid system modeling approaches, switching system models consist of a finite number of continuous subsystems and switching rules, where the switching rules determine when the system transitions between different subsystems (modes). There are lots of related works: ~\cite{hespanha2004stability, liberzon2003switching,schwartz2005minimum,long2022robust, long2021dynamic}. Each subsystem is typically described by a set of differential equations. However, the modeling and analysis of hybrid nonlinear systems face numerous challenges. Traditional reachability analysis methods for hybrid systems primarily focus on cases where the continuous dynamics are linear for each discrete mode~\cite{bird2023hybrid}. This assumption severely limits applicability to real-world systems, as nonaffine dynamics are prevalent in practical engineering applications.


\begin{figure}[htbp]
    \centering
\includegraphics[width=0.5\textwidth]{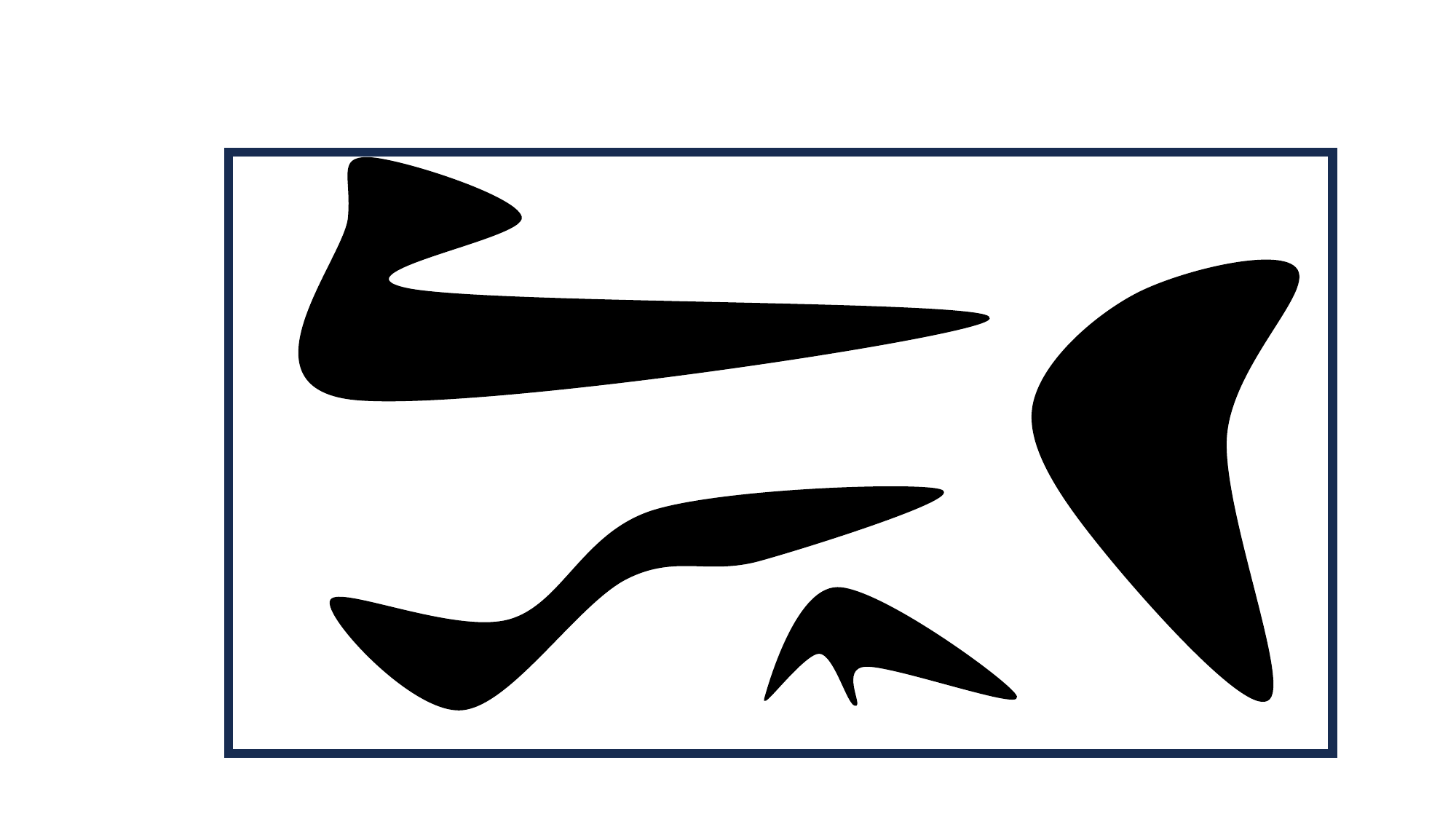}
    \caption{Four non-connected, non-convex obstacles (black regions) that require hybrid polynomial zonotope representation. Using 2-units binary variables, HPZ can efficiently encode these complex geometries, while individual constrained zonotopes and polynomial zonotopes cannot adequately represent such disconnected non-convex shapes.}
    \label{fig-Example0}
\end{figure}
The trade-off between modeling conservatism and accuracy represents a critical challenge. To analyze nonaffine systems using existing tools, models often require approximation and simplification, such as linearizing nonlinear systems~\cite{althoff2010reachability}, neglecting higher-order terms, or employing piecewise linear approximations~\cite{nikolaou2003linear, asarin2000reachability}. However, such approximation and simplification can introduce severe conservatism and fail to accurately reflect the original system behavior. Particularly in hybrid nonlinear systems, higher-order terms often contain crucial dynamic characteristic information~\cite{emara1996nonlinear, nonlinear_approximation}. Neglecting these higher-order terms may lead to omission of critical system behaviors or even introduce non-existent equilibrium points or spurious invariant sets. While such simplification may have limited impact in single-mode systems, in multi-mode switching hybrid systems, errors accumulate and amplify during mode transitions, ultimately causing severe deviations in reachability analysis results and compromising the reliability of safety verification.

Existing set representation methods—including constrained zonotopes~\cite{scott2016constrained}, polynomial zonotopes~\cite{kochdumper2020sparse}, and hybrid zonotopes~\cite{bird2023hybrid}—either suffer from failing to effectively handle the complexity of hybrid dynamics or lose tightness under high-order nonlinear mappings. The fundamental limitation lies in the inability to accurately represent arbitrary shapes, particularly non-convex, non-connected geometries arising in nonlinear systems. Polynomial zonotopes~\cite{kochdumper2020sparse} and constrained polynomial zonotopes~\cite{kochdumper2023} can represent connected non-convex shapes through polynomial generators, which enable exact forward propagation of nonconvex initial states in reachability analysis~\cite{zhang2025data}, but are restricted to single-mode continuous systems and cannot handle disconnected regions. Hybrid zonotopes~\cite{bird2023hybrid} can express non-connected shapes across multiple modes, yet each individual component remains essentially a constrained zonotope~\cite{scott2016constrained}, thus preserving convexity within each mode. As illustrated in Fig.~\ref{fig-Example0}, consider a workspace containing four distinct obstacles with complex geometries where each obstacle exhibits non-convex shapes and the obstacles are spatially disconnected from one another. Neither constrained polynomial zonotopes nor hybrid zonotopes can adequately capture such complex geometries—constrained polynomial zonotopes fail to represent the disconnected obstacles, while hybrid zonotopes cannot efficiently encode the polynomial-based curved boundaries that define each obstacle's non-convex geometry~\cite{hadjiloizou2024formal}.

To address these limitations, we propose a novel representation structure called \emph{Hybrid Polynomial Zonotopes (HPZ)}. Under the assumption that discrete logic is known, inspired by hybrid zonotope and its related applications~\cite{bird2023hybrid,koeln2024zonolab, siefert2025reachability}, we propose the HPZ method to precisely describe system states for piecewise nonaffine systems. HPZ combines the mode-dependent generator structure of hybrid zonotopes with the algebraic expressiveness of polynomial zonotopes. By attaching polynomial exponents to each hybrid generator, HPZ compactly encodes non-convex reachable states across modes, thereby capturing high-order state-input couplings without requiring vertex enumeration.

Our approach provides a unified modeling framework and analysis tool capable of precisely describing system behavior and addressing reachability analysis challenges for hybrid nonlinear systems. By employing HPZ, we enable more accurate reachability analysis for a broader class of hybrid systems where the dynamics within each discrete state can be nonaffine nonlinear—representing a significant advancement over traditional methods.

The main contributions of this paper are:
\begin{enumerate}
    \item The introduction of Hybrid Polynomial Zonotopes that extends polynomial zonotopes to hybrid systems with nonaffine submodels;
    \item A comprehensive library of HPZ operations including Minkowski sum, linear transformation, intersection operations, and union;
    \item Algorithms for performing reachability analysis using Hybrid Polynomial Zonotopes, including operations for continuous evolution and discrete transitions;
    \item The implementation of hybrid polynomial zonotopes and the associated algorithms publicly available as an open-source extension at \url{https://github.com/TUM-CPS-HN/HPZ}.
\end{enumerate}

The remainder of this paper is organized as follows: Section~\ref{sec-prelim} reviews related work on reachability analysis for hybrid systems and provides the necessary background on polynomial zonotopes and hybrid zonotopes. Section~\ref{sec-HPZ} introduces our novel Hybrid Polynomial Zonotopes representation. Section~\ref{sec-propPWNA} presents algorithms for reachability analysis using the proposed representation. Section~\ref{sec-numericalEx} evaluates our approach on benchmark systems, and Section~\ref{sec-conclusion} concludes the paper with directions for future work.
\section{Notation and preliminaries}\label{sec-prelim}
Matrices are denoted by \emph{uppercase Roman letters} (e.g.\ $G\in\mathbb{R}^{n\times n_g}$), whereas \emph{calligraphic capitals} represent sets (e.g.\ $\mathcal{Z}\subset\mathbb{R}^{n}$).  
Vectors and scalars appear as \emph{lower-case Roman letters} (e.g.\ $b\in\mathbb{R}^{n_c}$).
The $n$-dimensional unit hyper-cube is
$
  \mathcal{B}_{\infty}^{n}
  \;=\;
  \bigl\{x\in\mathbb{R}^{n}\,\bigl|\,\|x\|_{\infty}\le 1\bigr\},
$
and the set of all $n$-dimensional binary vectors is $\{-1,1\}^{n}$.
For a finite set $\mathcal{T}$, its cardinality is denoted $|\mathcal{T}|$; for example, $|\{-1,1\}^{3}| = 8$.
The concatenation of two column vectors is denoted by
$
  (\xi_{1}\,\xi_{2})
  \;=\;
  \bigl[\xi_{1}^{\top}\;\xi_{2}^{\top}\bigr]^{\top}.
$
Bold symbols $\mathbf{1}$ and $\mathbf{0}$ denote the all-ones and all-zeros matrices, respectively, while $\mathbf{I}$ denotes the identity matrix (dimension clear from context).  
A superscript ${}^\top$ indicates the transpose.
For a matrix $A$, entry $(i,j)$ is written $A_{(i,j)}$ and column $j$ is $A_{(:,j)}$.  
Indexed objects carry an integer subscript; for instance, $A_{n}$ is the $n$-th matrix in the family $\{A_{n}\}_{n\in\mathbb{N}}$.
Composite dimension symbols use superscripts: for example, $n_{g_1}$ refers to $n_{g}$ associated with the set $\mathcal{Z}_{1}$.  
Accordingly, element $(i,j)$ of $A_{n}$ is $A_{n,(i,j)}$, and entry $j$ of $\xi_{n}$ is $\xi_{n,j}$.  
Finally, for each $n\in\mathbb{N}$, let $A_{c_n}$ denote the $n$-th matrix in the family $\{A_{c_k}\}_{k\ge 1}$; we write $A_{c_n,(i,j)}$ for the $(i,j)$-entry of $A_{c_n}$.

\subsection{Set representations}

Before we introduce the hybrid polynomial
zonotope, we give a brief introduction about constrained zonotope, constrained polynomial zonotope, hybrid zonotope, and their representation.
\begin{definition}[Constrained Zonotope~\cite{scott2016constrained}]\label{def-conZono} 
   The set $\mathcal{Z}_c\subset{\rm I\!R}^n$ is a constrained zonotope if there exist $G_c\in{\rm I\!R}^{n\times n_g}$, $c\in{\rm I\!R}^{n}$, $A_c\in{\rm I\!R}^{n_c\times n_g}$, and $b\in{\rm I\!R}^{n_c}$ such that
    \begin{equation}\label{def-eqn-conZono}
        \mathcal{Z}_c=\left\{G_c\xi+c\:\middle|\:\|\xi\|_\infty\leq1, A_c\xi=b\right\} \: .
    \end{equation}
\end{definition}
The constrained zonotope is given in Constrained Generator-representation (CG-rep), and the shorthand notation of $\mathcal{Z}_c=\langle c,G_c,A_c,b\rangle_\mathcal{CZ}$ is used to denote the set given by \eqref{def-eqn-conZono}. Through the addition of the linear equality constraints $A_c\xi=b$ to the projected unit hypercube, the affine image of the constrained space of factors is no longer restricted to be symmetric.
\begin{definition}[Constrained Polynomial Zonotope~\cite{kochdumper2023}]
	 Given a constant offset $c \in R^n$, a generator matrix $G_c \in R^{n \times h}$, an exponent matrix $E \in \mathbb{N}_{0}^{p \times h}$, a constraint generator matrix $A_c \in R^{m \times q}$, a constraint vector $b \in R^m$, and a constraint exponent matrix $R \in \mathbb{N}_{0}^{p \times q}$, a constrained polynomial zonotope is defined as
	\begin{align*}
		&\mathcal{CPZ} = \bigg \{ c + \sum_{i=1}^{h} \bigg( \prod_{k=1}^p \xi_k^{E_{(k,i)}} \bigg) G_{c,(\cdot,i)} ~ \bigg | ~\\& \sum_{i=1}^{q} \bigg( \prod_{k=1}^p \xi_k^{R_{(k,i)}} \bigg) A_{c,(\cdot,i)} = b, ~ \xi_k \in [-1,1]   \bigg \}.
	\end{align*}
	The scalars $\xi_k$ are called factors, where the number of factors is $p$, the number of generators $G_{c,(\cdot,i)}$ is $h$, the number of constraints is $m$, and the number of constraint generators $A_{c,(\cdot,i)}$ is $q$. The order $\rho = \frac{h+q}{n}$ estimates the complexity of a constrained polynomial zonotope. We use the shorthand $\mathcal{CPZ} = \langle c,G_c,E,A_c,b,R \rangle_{\mathcal{CPZ}}$.
	\label{def:CPZ}
\end{definition}
\begin{definition}[Hybrid Zonotope~\cite{bird2023hybrid}]\label{def-hybridZono}
    The set $\mathcal{Z}_h\subset{\rm I\!R}^n$ is a hybrid zonotope if there exist $G_c\in{\rm I\!R}^{n\times n_{g}}$, $G_b\in{\rm I\!R}^{n\times n_{b}}$, $c\in{\rm I\!R}^{n}$, $A_c\in{\rm I\!R}^{n_{c}\times n_{g}}$, $A_b\in{\rm I\!R}^{n_{c}\times n_{b}}$, and $b\in{\rm I\!R}^{n_c}$ such that
    \begin{equation}\label{def-eqn-hybridZono}
        \mathcal{Z}_h = \left\{ \left[G_c \: G_b\right]\left[\begin{smallmatrix}\xi_c \\ \xi_b \end{smallmatrix}\right]  + c\: \middle| \begin{matrix} \left[\begin{smallmatrix}\xi_c \\ \xi_b \end{smallmatrix}\right]\in \mathcal{B}_\infty^{n_{g}} \times \{-1,1\}^{n_{b}}, \\ \left[A_c \: A_b\right]\left[\begin{smallmatrix}\xi^c \\ \xi^b \end{smallmatrix}\right] = b \end{matrix} \right\}\:.
\end{equation}
\end{definition}
The hybrid zonotope is given in \textit{Hybrid Constrained Generator-representation} (HCG-rep), and the shorthand notation of $\mathcal{Z}_h=\langle G_c,G_b,c,A_c,A_b,b\rangle_\mathcal{HZ} \subset{\rm I\!R}^n$ is used to denote the set given by \eqref{def-eqn-hybridZono}. When $n_{b}=0$, the hybrid zonotope set representation is equivalent to the constrained zonotope given by Definition \ref{def-conZono}. When $n_{b}\not=0$, the vector of binary factors may take on values from the discrete set ${\{-1,1\}^{n_{b}}}$ containing $2^{n_{b}}$ elements. The degrees of freedom of a hybrid zonotope is a function of both the number of continuous and binary factors with order $o_d=(n_g+n_b-n_c)/n$.
Given that $\|\xi_b\|_{\infty}=1\;\forall\:\xi_b\in\{-1,1\}^{n_{b}}$, the hybrid zonotope is a more general class than the constrained zonotope set representation.

\section{Hybrid Polynomial Zonotopes Representation}\label{sec-HPZ}
This section introduces the definition of hybrid polynomial zonotopes as an extension of the constrained polynomial zonotope through the addition of a vector of binary factors.

\begin{definition}[Hybrid Polynomial Zonotope] 
\label{def-hybridpolyZono}
    The set $\mathcal{Z}\subset{\rm I\!R}^n$ is a hybrid polynomial zonotope if there exist $G_c\in{\rm I\!R}^{n\times n_{g}}$, $G_b\in{\rm I\!R}^{n\times n_{b}}$, $c\in{\rm I\!R}^{n}$, $E\in{\rm I\!R}^{n_e\times n_g}$, $A_c\in{\rm I\!R}^{n_{c}\times n_{g}}$, $A_b\in{\rm I\!R}^{n_{c}\times n_{b}}$, $b\in{\rm I\!R}^{n_c}$,   and $R\in{\rm I\!R}^{n_e\times n_g}$such that 
    \begin{align}
		&\mathcal{HPZ} = \bigg \{ c +G_b\xi_b+ \sum_{i=1}^{n_g} \bigg( \prod_{k=1}^{n_e} \xi_{c,k}^{E_{(k,i)}} \bigg) G_{c,(\cdot,i)} ~
        \bigg |  A_b\xi_b+\nonumber\\&\sum_{i=1}^{n_g} \bigg( \prod_{k=1}^{n_e} \xi_{c,k}^{R_{(k,i)}} \bigg) A_{c,(\cdot,i)} = b, \left[\begin{smallmatrix}\xi_c \\ \xi_b \end{smallmatrix}\right]\in \mathcal{B}_\infty^{n_{e}} \times \{-1,1\}^{n_{b}}  \bigg \}.\label{defhybridZono}
	\end{align}
\end{definition}
The hybrid polynomial zonotope is given in \textit{Hybrid Polynomial Constrained Generator-representation} (HPCG-rep), and the shorthand notation of $\mathcal{Z}=\langle c,G_c,G_b,E,A_c,A_b,b,R\rangle_\mathcal{HPZ}\subset{\rm I\!R}^n$ is used to denote the set given by \eqref{defhybridZono}. 
\begin{remark}
When $E=I$, $R=I$, the HPZ set representation $\mathcal{Z}=\langle c,G_c,G_b,E,A_c,A_b,b,R\rangle_\mathcal{HPZ}$ is equivalent to the hybrid zonotope given by Definition \ref{def-hybridZono}. When $n_{b}=0$, the HPZ set representation $\mathcal{Z}=\langle c,G_c,\emptyset,E,A_c,\emptyset,b,R\rangle_\mathcal{HPZ}$ is equivalent to the constrained 
polynomial zonotope given by Definition \ref{def:CPZ}. Furthurmore, the HPZ set representation $\mathcal{Z}=\langle c,G_c,\emptyset,I,A_c,\emptyset,b,I\rangle_\mathcal{HPZ}$ and $\mathcal{Z}=\langle c,G_c,\emptyset,I,\emptyset,\emptyset,\emptyset,\emptyset\rangle_\mathcal{HPZ}$ is equivalent to the constrained zonotope given by Definition \ref{def-conZono} and zonotope.
\end{remark}

\begin{example}\label{exa-hybridpolyZono}
Let the set $\mathcal{CPZ}_1=\left\langle c_1,G_1,E_1,A_1,b_1,R_1\right\rangle_\mathcal{CPZ}\subset{\rm I\!R}^2$ be the constrained polynomial zonotope shown in Fig.~\ref{fig-Example1}, where
\begin{align*}
\mathcal{CPZ}_1=&\langle\begin{bmatrix}0\\0\end{bmatrix},\begin{bmatrix}1 & 0 & 1.5&0.5\\0&1&2&-2\end{bmatrix},\begin{bmatrix}1 & 0 & 0&0\\0&1&0&2\\0&0&1&1\end{bmatrix},\\
  &\;\begin{bmatrix}1&2&0.5\end{bmatrix},1,\begin{bmatrix}1 & 0 & 0\\0&1&0\\0&0&3\end{bmatrix}\rangle_{\mathcal{CPZ}}.
\end{align*}
Next, define the hybrid zonotope $\mathcal{HZ}=\left\langle G_1,G_{b_1},c_1,\emptyset,\emptyset,\emptyset\right\rangle_\mathcal{HZ}$.  This object compactly represents a non-convex set obtained as the union of an exponential number of convex subsets as shown in Fig.~\ref{fig-Example1}, while requiring only a linear number of continuous and discrete variables. 
Define a HPZ with continuous generators $G_c=G_1$, binary generators $G_{b,1}=\begin{bmatrix}3 & 1 & 4\\1&3&-4\end{bmatrix}$, exponent matrix $E=E_1$, continuous constraint generators $A_c=A_1$, binary constraint generators $A_{b,1}=\emptyset$, exponent constraint matrix $R=R_1$, constraint vector $b=b_1$, and center $c=c_1$, giving $\mathcal{HPZ}_{1}=\left\langle c_1,G_1,G_{b_1},E_1,A_1,A_{b_1},b_1,R_1\right\rangle_\mathcal{HPZ}$. By adding $n_b=3$ binary factors, $\mathcal{HPZ}_{1}$ equals $2^{n_b}=8$ translated copies of $\mathcal{CPZ}_1$ whose centers are shifted by $G_{b_1}\xi_b$ for every $\xi_b\in\{-1,1\}^3$, as depicted in Fig.~\ref{fig-Example1}. Including the continuous and binary factors in the equality constraints by introducing $A_{b_2}=\begin{bmatrix}1.5 & 1.5 & 1.5\end{bmatrix}$ yields $\mathcal{HPZ}_{2}=\left\langle c_1,G_1,G_{b_1},E_1,A_1,A_{b_2},b_1,R_1\right\rangle_\mathcal{HPZ}$, also illustrated in Fig.~\ref{fig-Example1}; unlike $\mathcal{HPZ}_{1}$, these eight sets are not identical because the nonlinear equality constraints on the continuous factors shift with each binary assignment, and one such assignment renders the constraints infeasible, leaving six non-empty constrained polynomial zonotopes in this example. By contrast, an HPZ can encode, with similar compactness, the non-convex union of exponentially many non-convex sets.
\end{example}

\begin{figure}[htbp]
    \centering
    \includegraphics[width=0.48\textwidth]{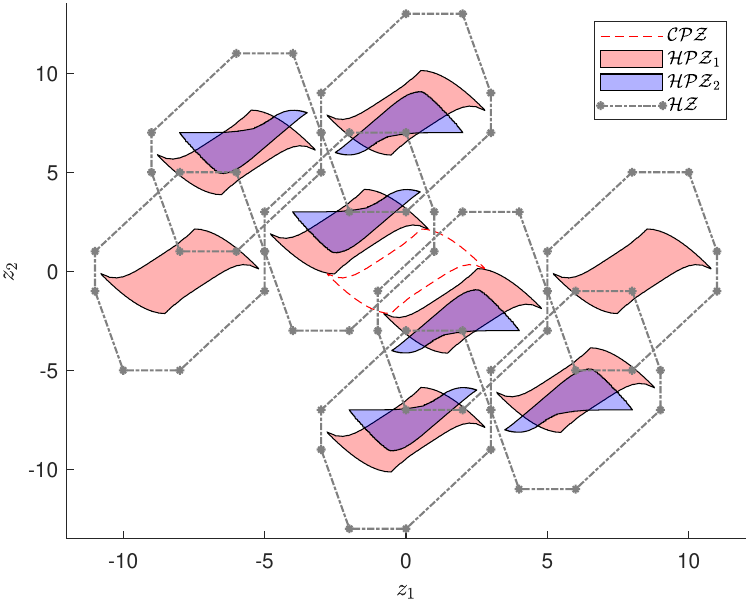}
    \caption{Hybrid polynomial zonotopes compared with Constrained Polynomial Zonotopes and Hybrid Zonotopes given in Ex. \ref{exa-hybridpolyZono}.}
    \label{fig-Example1}
\end{figure}
\subsection{Basic set operations with hybrid polynomial zonotopes}

The identities for linear mappings, Minkowski sums, generalized intersections, and halfspace intersections of constrained polynomial zonotopes may be extended to HPZs as follows. Beyond these basic set operations, the closure of HPZ under unions has been proven in this section.

\begin{proposition}[Linear Map] Given HPZ $\mathcal{Z}=\langle c,G_c,G_b,E,A_c,A_b,b,R\rangle_\mathcal{HPZ}\in{\rm I\!R}^{n}$ and a matrix $M\in{\rm I\!R}^{m\times n}$, the following identities hold:
\begin{align} \label{linearmap}
    M \otimes \mathcal{Z}=\langle Mc,MG_c,MG_b,E,A_c,A_b,b,R\rangle_\mathcal{HPZ}.
\end{align}
\end{proposition}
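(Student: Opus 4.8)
The plan is to prove the set equality $M \otimes \mathcal{Z} = \{Mz : z \in \mathcal{Z}\} = \langle Mc,MG_c,MG_b,E,A_c,A_b,b,R\rangle_\mathcal{HPZ}$ by a direct substitution argument that exploits the linearity of $M$, establishing both inclusions simultaneously through a bijection on the factor variables. First I would take an arbitrary point $z \in \mathcal{Z}$, which by Definition~\ref{def-hybridpolyZono} admits a parametrization by factors $(\xi_c,\xi_b) \in \mathcal{B}_\infty^{n_e} \times \{-1,1\}^{n_b}$ satisfying the equality constraint, of the form $z = c + G_b\xi_b + \sum_{i=1}^{n_g} \left(\prod_{k=1}^{n_e} \xi_{c,k}^{E_{(k,i)}}\right) G_{c,(\cdot,i)}$.

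The central step is to apply $M$ to this expression and push it inside the sum. Since $M$ is linear and the scalar monomial coefficients $\prod_{k=1}^{n_e} \xi_{c,k}^{E_{(k,i)}}$ are scalars, $M$ commutes with them and distributes over the generator sum, yielding $Mz = Mc + (MG_b)\xi_b + \sum_{i=1}^{n_g} \left(\prod_{k=1}^{n_e} \xi_{c,k}^{E_{(k,i)}}\right) (MG_c)_{(\cdot,i)}$, where I would invoke the column identity $M G_{c,(\cdot,i)} = (MG_c)_{(\cdot,i)}$. This is exactly the generator expression of a point in $\langle Mc,MG_c,MG_b,E,A_c,A_b,b,R\rangle_\mathcal{HPZ}$ under the \emph{same} factors $(\xi_c,\xi_b)$ and the \emph{same} exponent matrix $E$.

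The key observation that makes the argument clean is that the feasibility condition is untouched by $M$: the equality constraint $A_b\xi_b + \sum_{i=1}^{n_g} \left(\prod_{k=1}^{n_e} \xi_{c,k}^{R_{(k,i)}}\right) A_{c,(\cdot,i)} = b$ and the factor domain $\mathcal{B}_\infty^{n_e} \times \{-1,1\}^{n_b}$ depend only on $(A_c,A_b,b,R)$ and the factors, none of which involve $M$ or the ambient space $\mathbb{R}^n$. Hence a choice of $(\xi_c,\xi_b)$ is admissible for $\mathcal{Z}$ if and only if it is admissible for the right-hand side set, so the map $(\xi_c,\xi_b) \mapsto Mz(\xi_c,\xi_b)$ is a surjection onto the right-hand side set whose image is precisely $M \otimes \mathcal{Z}$. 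Running this correspondence backward, any point of the right-hand side arises as $Mz$ for an admissible $z \in \mathcal{Z}$, which gives the reverse inclusion and completes the proof.

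I do not anticipate a genuine obstacle here, as the result is structural rather than computational; the only point requiring care is verifying that $M$ acts solely on the center and the two generator matrices $G_c, G_b$ while leaving the exponent matrices $E, R$ and the constraint data $A_c, A_b, b$ invariant. Making this separation explicit is what distinguishes the HPZ case from a plain zonotope, since one must confirm that the polynomial monomial structure (the products of factors raised to the exponents in $E$) passes through the linear map unchanged precisely because those products are scalar multipliers of the generators.
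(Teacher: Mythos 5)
Your proposal is correct and follows essentially the same route as the paper's proof: apply $M$ to the generator expression of an arbitrary $z\in\mathcal{Z}$, push it through the scalar monomial coefficients onto the columns of $G_c$ and $G_b$, and observe that the constraint data $(A_c,A_b,b,R)$ and the factor domain are unaffected, giving both inclusions. Your version is slightly more explicit about why feasibility of the factors is preserved, but the argument is the same.
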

\begin{proof}
Our proof follows the lines of \cite{bird2023hybrid}, with additional refinements for the non-convex case. Let $\mathcal{Z}_M$ denote the HPZ given by the right-hand side of {\eqref{linearmap}}. For any point $z\in \mathcal{Z}$ there exists some $\xi_m=(\xi_{c_m},\xi_{b_m})\in\mathcal{B}_{\infty}^{n_{e_m}}\times\{-1,1\}^{n_{b_m}}$ such that multiplying both sides of $z$ by $M$ gives 
\begin{align}\label{Map1}
&\sum_{i=1}^{n_{g_m}} \bigg( \prod_{k=1}^{n_{e_m}} {\xi_{c_m,k}}^{R_{(k,i)}} \bigg) A_{c,(\cdot,i)}+A_{b} \xi_{b_z} ={b}, 
\end{align}
and 
\begin{align}\label{Map2}
    &       Mz\!=\!Mc+MG_{b} \xi_{b_m}+ M\sum_{i=1}^{n_{g_m}} \bigg( \prod_{k=1}^{n_{e_m}} {\xi_{c_m,k}}^{E_{(k,i)}} \bigg) G_{c,(\cdot,i)},
 \end{align}
 thus $M\mathcal{Z}\subseteq \mathcal{Z}_M$. Conversely, for any point $z_m\in \mathcal{Z}_M$ there exists some $\xi_m\in\mathcal{B}_{\infty}^{n_{e_m}}\times\{-1,1\}^{n_{b_m}}$ such that \eqref{Map1} and \eqref{Map2} are satisfied. Thus there exists some $z\in \mathcal{Z}$ such that $Mz=z_m$. Therefore $\mathcal{Z}_M\subseteq M\mathcal{Z}$ and $M\mathcal{Z}=\mathcal{Z}_M$.
\hfill
\end{proof}

\begin{proposition}[Minkowski Sum] Given HPZ $\mathcal{Z}_1=\langle c_1,G_{c_1},G_{b_1},E_1,A_{c_1},A_{b_1},b_1,R_1\rangle_\mathcal{HPZ}$ and $\mathcal{Z}_2=\langle c_2,G_{c_2},G_{b_2},E_2,A_{c_2},A_{b_2},b_2,R_2\rangle_\mathcal{HPZ}$, their Minkowski sum is:
\begin{align}
   & \mathcal{Z}_1 \oplus \mathcal{Z}_2=\bigg \langle c_1 + c_2,\begin{bmatrix} G_{b_1} & G_{b_2} \end{bmatrix}, \begin{bmatrix} G_{c_1} & G_{c_2} \end{bmatrix}, \begin{bmatrix} E_1 & \mathbf{0} \\ \mathbf{0} & E_2 \end{bmatrix}, \nonumber\\&~~~~~~\begin{bmatrix} A_{b_1} & \mathbf{0} \\ \mathbf{0} & A_{b_2} \end{bmatrix}, \begin{bmatrix} A_{c_1} & \mathbf{0} \\ \mathbf{0} & A_{c_2} \end{bmatrix},\begin{bmatrix} b_1 \\ b_2 \end{bmatrix}, \begin{bmatrix} R_1 & \mathbf{0} \\ \mathbf{0} & R_2 \end{bmatrix}  \bigg \rangle_\mathcal{HPZ} \label{minksum}.
\end{align}
\end{proposition}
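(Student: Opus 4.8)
The plan is to establish the identity by double inclusion, showing $\mathcal{Z}_1 \oplus \mathcal{Z}_2 \subseteq \mathcal{Z}_\oplus$ and $\mathcal{Z}_\oplus \subseteq \mathcal{Z}_1 \oplus \mathcal{Z}_2$, where $\mathcal{Z}_\oplus$ denotes the HPZ on the right-hand side of \eqref{minksum}. The construction stacks the factors of the two summands into $\xi_c = (\xi_{c_1},\xi_{c_2}) \in \mathcal{B}_\infty^{n_{e_1}+n_{e_2}}$ and $\xi_b = (\xi_{b_1},\xi_{b_2}) \in \{-1,1\}^{n_{b_1}+n_{b_2}}$, so that the combined factor space is exactly the Cartesian product of the two original factor spaces. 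Following the structure of the hybrid zonotope argument in \cite{bird2023hybrid}, the whole identity reduces to one algebraic observation about the block-diagonal exponent matrices, which I isolate first.

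The key step is to verify that the block-diagonal structure of $E = \left[\begin{smallmatrix} E_1 & \mathbf{0} \\ \mathbf{0} & E_2 \end{smallmatrix}\right]$ and $R = \left[\begin{smallmatrix} R_1 & \mathbf{0} \\ \mathbf{0} & R_2 \end{smallmatrix}\right]$ decouples the monomials across the two summands. For a generator inherited from $\mathcal{Z}_1$ (a column $i \le n_{g_1}$ in the first block), the exponents attached to the factors $\xi_{c_2}$ are identically zero, so under the convention $\xi^0 = 1$ the combined monomial $\prod_{k=1}^{n_{e_1}+n_{e_2}} \xi_{c,k}^{E_{(k,i)}}$ collapses to $\prod_{k=1}^{n_{e_1}} \xi_{c_1,k}^{E_{1,(k,i)}}$, which is exactly the monomial that appears in $\mathcal{Z}_1$; a symmetric statement holds for the second block and for the constraint exponent matrix $R$. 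This is precisely where the nonconvex polynomial case departs from the affine hybrid zonotope setting, and I expect it to be the main obstacle to state rigorously, since it relies on the $0^0 = 1$ convention remaining valid for continuous factors that may take the value zero.

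With the decoupling established, the forward inclusion follows directly: given $z_1 \in \mathcal{Z}_1$ and $z_2 \in \mathcal{Z}_2$ realized by factors $(\xi_{c_1},\xi_{b_1})$ and $(\xi_{c_2},\xi_{b_2})$, the stacked factors reproduce $z_1 + z_2$ because every cross-block monomial vanishes, leaving precisely the center $c_1+c_2$, the concatenated binary term $[G_{b_1}\;G_{b_2}]\xi_b$, and the concatenated continuous expansion. Moreover, the stacked equality constraint with block-diagonal $A_c$, $A_b$ and right-hand side $(b_1,b_2)$ separates row-wise into the two original constraints, both of which hold by assumption. For the reverse inclusion I take an arbitrary point of $\mathcal{Z}_\oplus$, split its factors as above, and note that equality of the block-diagonal constraint to $(b_1,b_2)$ forces each sub-constraint independently; the two halves of the generator expansion then define admissible $z_1 \in \mathcal{Z}_1$ and $z_2 \in \mathcal{Z}_2$ summing to the chosen point. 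This closes both inclusions and yields \eqref{minksum}.
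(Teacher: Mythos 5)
Your proposal is correct and follows essentially the same double-inclusion, factor-stacking argument as the paper's proof: concatenate $\xi_{c}=(\xi_{c_1}\,\xi_{c_2})$ and $\xi_{b}=(\xi_{b_1}\,\xi_{b_2})$, observe that the block-diagonal constraint separates row-wise, and split factors for the reverse inclusion. The only difference is that you make explicit the $\xi^{0}=1$ decoupling induced by the block-diagonal exponent matrices $E$ and $R$ (the paper leaves this implicit), which is a welcome clarification—just note that the cross-block dependence becomes trivial (the factors enter with exponent zero) rather than the monomial itself "vanishing."
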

\begin{proof}
Our proof follows the lines of \cite{bird2023hybrid}, with additional refinements for the non-convex case. Let $\mathcal{Z}_S$ denote the HPZ given by the right-hand side of \eqref{minksum}. For any $z_1\in \mathcal{Z}_1$ there exists some $\xi_1\in\mathcal{B}_{\infty}^{n_{e_1}}\times\{-1,1\}^{n_{b_1}}$ such that
\begin{align}\label{constraintssum}
&\sum_{i=1}^{n_{g_m}} \bigg( \prod_{k=1}^{n_{e_m}} {\xi_{c_m,k}}^{R_{m,(k,i)}} \bigg) {A}_{c_m,(\cdot,i)}+A_{b_m} \xi_{b_m} ={b}_{m}, 
\end{align}
and 
\begin{align}\label{zonotopesum}
    & z_m=c_m+G_{b_m} \xi_{b_m}+ \sum_{i=1}^{n_{g_m}} \bigg( \prod_{k=1}^{n_{e_m}} {\xi_{c_m,k}}^{E_{m,(k,i)}} \bigg) G_{c_m,(\cdot,i)}
 \end{align} with $m=1$. Similarly for any $z_2\in \mathcal{Z}_2$ there exists some $\xi_2\in\mathcal{B}_{\infty}^{n_{e_2}}\times\{-1,1\}^{n_{b_2}}$ such that 
\eqref{constraintssum} and \eqref{zonotopesum} with $m=2$. Let $\xi_{c_s}=(\xi_{c_1}~\xi_{c_2})$ and $\xi_{b_s}=(\xi_{b_1}~\xi_{b_2})$. Then $\xi_s\in\mathcal{B}_{\infty}^{n_{e_1}+n_{e_2}}\times\{-1,1\}^{n_{b_1}+n_{b_2}}$ and adding $z_1$ and $z_2$ together gives
\begin{small}
	\begin{align}
		&\mathcal{Z}_1 \oplus \mathcal{Z}_2 {=} \big\{ z_1 + z_2  \big| z_1 \in \mathcal{Z}_1,~z_2 \in \mathcal{Z}_2 \big\}  {=} \nonumber\\
		& \bigg \{ c_1 + c_2 + G_{b_1}\xi_{b_1}+ G_{b_2}\xi_{b_2}+ \sum_{i=1}^{n_{g_1}} \bigg( \prod_{k=1}^{n_{e_1}} {\xi_{c_1,k}}^{E_{1,(k,i)}} \bigg) G_{1,(\cdot,i)} + \nonumber\\& \sum_{i=1}^{n_{g_2}} \bigg( \prod_{k=1}^{n_{e_2}} {\xi_{c_2,k}}^{E_{2,(k,i)}} \bigg) G_{2,(\cdot,i)}  \bigg | \sum_{i=1}^{n_{g_1}} \bigg( \prod_{k=1}^{n_{e_1}} {\xi_{c_1,k}}^{R_{1,(k,i)}} \bigg) A_{c_1,(\cdot,i)} \nonumber\\&+A_{b_1}\xi_1 = b_1, ~ \sum_{i=1}^{n_{g_2}} \bigg( \prod_{k=1}^{n_{e_2}} {\xi_{c_2,k}}^{R_{2,(k,i)}} \bigg) A_{c_2,(\cdot,i)}+A_{b_2}\xi_2 = b_2,  \nonumber\\& ~~~~~~~\left[\begin{smallmatrix}\xi_{c_s} \\ \xi_{b_s} \end{smallmatrix}\right]\in \mathcal{B}_\infty^{n_{e_1}+n_{e_2}} \times \{-1,1\}^{n_{b_1}+n_{b_2}}   \bigg \} {=} \nonumber\\& \bigg \langle c_1 + c_2,\begin{bmatrix} G_{b_1} & G_{b_2} \end{bmatrix}, \begin{bmatrix} G_{c_1} & G_{c_2} \end{bmatrix}, \begin{bmatrix} E_1 & \mathbf{0} \\ \mathbf{0} & E_2 \end{bmatrix}, \begin{bmatrix} A_{b_1} & \mathbf{0} \\ \mathbf{0} & A_{b_2} \end{bmatrix},  \nonumber\\& ~~~~~~~~~~~~~~\begin{bmatrix} A_{c_1} & \mathbf{0} \\ \mathbf{0} & A_{c_2} \end{bmatrix},\begin{bmatrix} b_1 \\ b_2 \end{bmatrix}, \begin{bmatrix} R_1 & \mathbf{0} \\ \mathbf{0} & R_2 \end{bmatrix}  \bigg \rangle_\mathcal{HPZ},
        \label{minkSum-zplusw}
    \end{align}
\end{small}
thus $z_1+z_2\in\mathcal{Z}_S$ and $\mathcal{Z}_1\oplus\mathcal{Z}_2\subseteq \mathcal{Z}_S$. Conversely, for any $z_s\in\mathcal{Z}_S$ there exists some $\xi_s\in\mathcal{B}_{\infty}^{n_{e_1}+n_{e_2}}\times\{-1,1\}^{n_{b_1}+n_{b_2}}$ such that \eqref{constraintssum} holds with $m=1,2$ and $z_s=z_1+z_2$ as defined by \eqref{minkSum-zplusw}. Letting $\xi_{c_s}=(\xi_{c_1}~\xi_{c_2})$ and $\xi_{b_s}=(\xi_{b_1}~\xi_{b_2})$ gives $z_s\in\mathcal{Z}_1\oplus\mathcal{Z}_2$ and $\mathcal{Z}_S\subseteq \mathcal{Z}_1\oplus\mathcal{Z}_2$, therefore $\mathcal{Z}_1\oplus\mathcal{Z}_2=\mathcal{Z}_S$, which concludes the proof.
 \hfill 
\end{proof}

\begin{proposition}[Generalized Intersection] Given HPZ $\mathcal{Z}_1=\langle c_1,G_{c_1},G_{b_1},E_1,A_{c_1},A_{b_1},b_1,R_1\rangle_\mathcal{HPZ}$ and $\mathcal{Z}_2=\langle c_2,G_{c_2},G_{b_2},E_2,A_{c_2},A_{b_2},b_2,R_2\rangle_\mathcal{HPZ}$, their intersection is:
\begin{align}
			&\mathcal{Z}_1 \cap \mathcal{Z}_2 = \bigg \langle c_1, \begin{bmatrix}G_{c_1} &\mathbf{0}\end{bmatrix}, \begin{bmatrix}G_{b_1} &\mathbf{0}\end{bmatrix}, \begin{bmatrix} E_1 &\mathbf{0}\\ \mathbf{0}&I \end{bmatrix},\nonumber\\& \begin{bmatrix} A_{c_1} & \mathbf{0} & \mathbf{0} & \mathbf{0} \\ \mathbf{0} & A_{c_2} & \mathbf{0} & \mathbf{0}  \\ \mathbf{0} & \mathbf{0} & G_{c_1} & -G_{c_2} \end{bmatrix}, \begin{bmatrix} A_{b_1} & \mathbf{0} & \mathbf{0} & \mathbf{0} \\ \mathbf{0} & A_{b_2} & \mathbf{0} & \mathbf{0}  \\ \mathbf{0} & \mathbf{0} & G_{b_1} & -G_{b_2} \end{bmatrix},\nonumber\\& \begin{bmatrix} b_1 \\ b_2 \\ c_2 - c_1 \end{bmatrix}, \begin{bmatrix} R_1 & \mathbf{0} & E_1 & \mathbf{0} \\ \mathbf{0} & R_2 & \mathbf{0} & E_2 \end{bmatrix}  \bigg \rangle_{\mathcal{HPZ}}.\label{genintersection}
\end{align}
\end{proposition}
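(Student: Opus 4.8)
The plan is to establish the set identity \eqref{genintersection} by double inclusion, following the same template used for the Minkowski sum and the hybrid–zonotope intersection of \cite{bird2023hybrid}, but now tracking the polynomial exponent matrices explicitly. Write $\mathcal{Z}_I$ for the HPZ on the right-hand side of \eqref{genintersection}, and observe that its continuous factor vector is the concatenation $\xi_c=(\xi_{c_1}\,\xi_{c_2})\in\mathcal{B}_\infty^{n_{e_1}+n_{e_2}}$ and its binary factor vector is $\xi_b=(\xi_{b_1}\,\xi_{b_2})\in\{-1,1\}^{n_{b_1}+n_{b_2}}$. The geometry (center, continuous and binary generators, and the leading block of the exponent matrix) is inherited verbatim from $\mathcal{Z}_1$; the extra identity block in $E$ together with the zero generator columns of $[G_{c_1}\,\mathbf 0]$ and $[G_{b_1}\,\mathbf 0]$ merely registers the factors $\xi_{c_2}$ without contributing to the point, while the three stacked constraint blocks encode, respectively, feasibility in $\mathcal{Z}_1$, feasibility in $\mathcal{Z}_2$, and the equality forcing the $\mathcal{Z}_1$-point to coincide with the $\mathcal{Z}_2$-point.

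For the inclusion $\mathcal{Z}_1\cap\mathcal{Z}_2\subseteq\mathcal{Z}_I$, I would take $z\in\mathcal{Z}_1\cap\mathcal{Z}_2$ and pick feasible factors $(\xi_{c_1},\xi_{b_1})$ and $(\xi_{c_2},\xi_{b_2})$ realizing $z$ in $\mathcal{Z}_1$ and $\mathcal{Z}_2$ as in \eqref{zonotopesum} and \eqref{constraintssum}. Throughout I abbreviate by $\mathrm{mono}(\xi;E)$ the column vector whose $i$-th entry is $\prod_{k}\xi_{k}^{E_{(k,i)}}$. Concatenating the two factor tuples yields a candidate factor for $\mathcal{Z}_I$, and I then verify the three constraint rows in turn: the first two reduce exactly to the constraints of $\mathcal{Z}_1$ and $\mathcal{Z}_2$ because the off-diagonal zero blocks of $A_c$, $A_b$ and $R$ kill all cross terms, and the third is precisely $G_{c_1}\,\mathrm{mono}(\xi_{c_1};E_1)-G_{c_2}\,\mathrm{mono}(\xi_{c_2};E_2)+G_{b_1}\xi_{b_1}-G_{b_2}\xi_{b_2}=c_2-c_1$, which holds since both representations equal $z$. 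Evaluating the generator part of $\mathcal{Z}_I$ then returns $c_1+G_{b_1}\xi_{b_1}+G_{c_1}\,\mathrm{mono}(\xi_{c_1};E_1)=z$, the dummy columns contributing nothing, so $z\in\mathcal{Z}_I$.

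The reverse inclusion $\mathcal{Z}_I\subseteq\mathcal{Z}_1\cap\mathcal{Z}_2$ runs the same argument backwards: given $z\in\mathcal{Z}_I$ with feasible factors, the generator expression together with the first constraint block exhibits $(\xi_{c_1},\xi_{b_1})$ as a valid witness for $z\in\mathcal{Z}_1$, while rearranging the third constraint block shows $z=c_2+G_{b_2}\xi_{b_2}+G_{c_2}\,\mathrm{mono}(\xi_{c_2};E_2)$, which with the second constraint block witnesses $z\in\mathcal{Z}_2$. Hence $z$ lies in both sets, and the two inclusions give the claimed equality.

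The step I expect to be the main obstacle is the exponent bookkeeping that is absent in the purely linear derivation of \cite{bird2023hybrid}. Concretely, I must show that each monomial $\prod_{k=1}^{n_{e_1}+n_{e_2}}\xi_{c,k}^{E_{(k,i)}}$ (and the analogous constraint monomial built from $R$) collapses to a monomial in a single factor block: for a column $i$ drawn from the first block the bottom rows of $E$ (resp.\ $R$) vanish, so the $\xi_{c_2}$ factors enter with exponent $0$ and the product equals $\mathrm{mono}(\xi_{c_1};E_1)$, and symmetrically for the second block. Verifying that the block-zero pattern of $E$, $R$, $A_c$, and $A_b$ guarantees this clean factorization—so that the combined constraints genuinely decouple into the two original constraint systems plus the single coupling equality—is the crux; once it is in place, both inclusions follow mechanically.
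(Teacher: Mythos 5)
Your proposal is correct and follows essentially the same route as the paper's proof: a double inclusion with concatenated factor vectors $(\xi_{c_1}\,\xi_{c_2})$, $(\xi_{b_1}\,\xi_{b_2})$, where the first two constraint blocks reproduce the feasibility conditions of $\mathcal{Z}_1$ and $\mathcal{Z}_2$ and the third enforces equality of the two generator expressions. The only difference is cosmetic (you present the inclusions in the opposite order and make the monomial-factorization bookkeeping for the block-zero pattern of $E$ and $R$ explicit, which the paper leaves implicit).
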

\begin{proof}
Our proof follows the lines of \cite{bird2023hybrid}, with additional refinements for the non-convex case. Let $\mathcal{Z}_G$ denote the HPZ given by the right-hand side of {\eqref{genintersection}}. For any $z_g\in\mathcal{Z}_G$ there exists some $\xi_g\in\mathcal{B}_{\infty}^{n_{e_1}+n_{e_2}}\times\{-1,1\}^{n_{b_1}+n_{b_2}}$ such that
\begin{align} 
&c_1  + G_{b_1}\xi_{b_1} + \sum_{i=1}^{n_{g_1}} \bigg( \prod_{k=1}^{n_{e_1}} {\xi_{c_1,k}}^{E_{1,(k,i)}} \bigg) G_{1,(\cdot,i)} =  \nonumber\\&c_2+ G_{b_2}\xi_{b_2} + \sum_{i=1}^{n_{g_2}} \bigg( \prod_{k=1}^{n_{e_2}} {\xi_{c_2,k}}^{E_{2,(k,i)}} \bigg) G_{2,(\cdot,i)}, \label{gencon1}
\end{align}
\begin{align} 
    &\sum_{i=1}^{n_{g_1}} \bigg( \prod_{k=1}^{n_{e_1}} {\xi_{c_1,k}}^{R_{1,(k,i)}} \bigg) A_{c_1,(\cdot,i)}+A_{b_1}\xi_1 = b_1,\label{gencon2}
\end{align}
\begin{align} 
& \sum_{i=1}^{n_{g_2}} \bigg( \prod_{k=1}^{n_{e_2}} {\xi_{c_2,k}}^{R_{2,(k,i)}} \bigg) A_{c_2,(\cdot,i)}+A_{b_2}\xi_2 = b_2 \label{gencon3}
\end{align}
and
\begin{align} \label{gencres}
z_g=&c_1  + G_{b_1}\xi_{b_1}+ \mathbf{0}\xi_{b_2}+ \nonumber\\&\sum_{i=1}^{n_{g_1}} \bigg( \prod_{k=1}^{n_{e_1}} {\xi_{c_1,k}}^{E_{1,(k,i)}} \bigg) G_{1,(\cdot,i)}+\sum_{i=1}^{n_{g_2}} \mathbf{0} G_{1,(\cdot,i)}
\end{align}
Letting $\xi_{c_g}=(\xi_{c_1}~\xi_{c_2})$ and $\xi_{b_g}=(\xi_{b_1}~\xi_{b_2})$ gives
\begin{align}\label{genres1}
&z_g=c_1  + G_{b_1}\xi_{b_1}+\sum_{i=1}^{n_{g_1}} \bigg( \prod_{k=1}^{n_{e_1}} {\xi_{c_1,k}}^{E_{1,(k,i)}} \bigg) G_{1,(\cdot,i)}
\end{align}
and the satisfaction of constraint \eqref{gencon2}, thus $z_g\in \mathcal{Z}_1$. From the equality constraints \eqref{gencon1} and \eqref{gencon3}, we have \begin{align} \label{genres2}
&z_g=c_2  + G_{b_2}\xi_{b_2}+\sum_{i=1}^{n_{g_2}} \bigg( \prod_{k=1}^{n_{e_2}} \xi_{c,k}^{E_{2(k,i)}} \bigg) G_{2(\cdot,i)}
\end{align}
and the satisfaction of constraint \eqref{gencon3}, which gives $z_g\in \mathcal{Z}_2$. Therefore $z_g\in\mathcal{Z}_1\cap \mathcal{Z}_2$ and $\mathcal{Z}_G\subseteq \mathcal{Z}_1\cap \mathcal{Z}_2$. Conversely, for any $z_g\in \mathcal{Z}_1\cap \mathcal{Z}_2$ there exists some $\xi_1\in\mathcal{B}_{\infty}^{n_{e_1}}\times\{-1,1\}^{n_{b_1}}$ such that \eqref{genres1} and \eqref{gencon2} are satisfied. Furthermore, there exists $\xi_2\in\mathcal{B}_{\infty}^{n_{e_2}}\times\{-1,1\}^{n_{b_2}}$ such that \eqref{genres2} and \eqref{gencon3} are satisfied. Letting $\xi_{c_g}=(\xi_{c_1}~\xi_{c_2})$ and $\xi_{b_g}=(\xi_{b_1}~\xi_{b_2})$ implies that $\xi_g\in\mathcal{B}_{\infty}^{n_{e_1}+n_{e_2}}\times\{-1,1\}^{n_{b_1}+n_{b_2}}$ satisfies {\eqref{gencon1}}, {\eqref{gencon2}}, and {\eqref{gencon3}}, and $z_1=z_g$ in \eqref{gencres}. Therefore, $z_g\in\mathcal{Z}_G$, $\mathcal{Z}_1\cap \mathcal{Z}_2 \subseteq \mathcal{Z}_G$, and $\mathcal{Z}_1\cap \mathcal{Z}_2=\mathcal{Z}_G$.    \hfill
\end{proof}
\begin{proposition}[Halfspace Intersection]\label{halfspaceintersection} Given HPZ $\mathcal{Z}=\langle c,G_c,G_b,E,A_c,A_b,b,R\rangle_\mathcal{HPZ}\in{\rm I\!R}^{n}$, $M\in{\rm I\!R}^{m\times n}$ and $\mathcal{H}^{-}=\{x\in{\rm I\!R}^m~|~l^Tx\leq \rho\}$, their intersection is:
\begin{align}
\mathcal{Z}_h&\cap_M\mathcal{H}^{-}=\langle c,\begin{bmatrix}G_c & \mathbf{0} \end{bmatrix},G_b, \begin{bmatrix}E &\mathbf{0} \\ \mathbf{0}&I \end{bmatrix},\nonumber\\&\begin{bmatrix}A_c &\mathbf{0}&\mathbf{0}\\ \mathbf{0} & l^TMG_c & -1/2(\rho-l_m)\end{bmatrix}, 
      \begin{bmatrix}A_b\\h^TMG_b \end{bmatrix},\nonumber\\&\begin{bmatrix}b\\1/2(\rho+l_m)-l^TMc  \end{bmatrix},   \begin{bmatrix}
\begin{bmatrix}R \\ \mathbf{0}\end{bmatrix} & \begin{bmatrix} E & \mathbf{0} \\ \mathbf{0} & I \end{bmatrix}
\end{bmatrix} \rangle_\mathcal{HPZ},\label{halfintersection}
\end{align}
where
\begin{small}
\begin{align}
      & l_m=h^TMc-\sum_{i=1}^{n_{g}}\vert \bigg( \prod_{k=1}^{n_e} \xi_{c,k}^{E_{(k,i)}} \bigg) l^TMG_{c,(\cdot,i)}\vert -\sum_{i=1}^{n_{b}}\vert h^TMG_{b,(\cdot,i)}\vert.\label{lm}
\end{align}
\end{small}

\end{proposition}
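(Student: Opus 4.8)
The plan is to read $\mathcal{Z}\cap_M\mathcal{H}^{-}$ as the generalized halfspace intersection $\{z\in\mathcal{Z}\mid Mz\in\mathcal{H}^{-}\}=\{z\in\mathcal{Z}\mid l^{T}Mz\le\rho\}$ (taking $h=l$, since the statement uses the two symbols interchangeably), let $\mathcal{Z}_H$ denote the HPZ on the right-hand side of \eqref{halfintersection}, and follow the constrained/hybrid zonotope technique of converting the inequality $l^{T}Mz\le\rho$ into a single equality by appending one fresh continuous slack factor $\xi_{c,\mathrm{new}}\in[-1,1]$. This factor is given a zero column in the continuous generator matrix $[G_c\;\mathbf{0}]$ and exponent one (the trailing $I$-block of the augmented exponent matrices), so that it never perturbs the geometric point $z$ but becomes available to the constraint system. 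Writing $l^{T}Mz=l^{T}Mc+\sum_{i}\bigl(\prod_k\xi_{c,k}^{E_{(k,i)}}\bigr)l^{T}MG_{c,(\cdot,i)}+l^{T}MG_b\xi_b$, the appended equality row (the second constraint block formed from $l^{T}MG_c$, the slack coefficient $-\tfrac12(\rho-l_m)$, the binary block $h^{T}MG_b$, and right-hand side $\tfrac12(\rho+l_m)-l^{T}Mc$) is engineered to rearrange exactly to $l^{T}Mz=\tfrac12(\rho+l_m)+\tfrac12(\rho-l_m)\,\xi_{c,\mathrm{new}}$.

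First I would establish the support bound $l_m\le l^{T}Mz$ for every $z\in\mathcal{Z}$. Since each continuous factor satisfies $|\xi_{c,k}|\le 1$, every monomial obeys $\bigl|\prod_k\xi_{c,k}^{E_{(k,i)}}\bigr|\le 1$, and each binary factor has $|\xi_{b,j}|=1$; applying the triangle inequality to the expansion of $l^{T}Mz$ then yields the constant lower bound $l_m$ recorded in \eqref{lm}. I expect this to be the main obstacle: unlike the affine constrained/hybrid zonotope case, the polynomial monomials share factors and are therefore not independently extremal, so $l_m$ is generally \emph{not} tight. The point to verify carefully is that $l_m$ nonetheless remains a valid under-estimate of $\min_{z\in\mathcal{Z}}l^{T}Mz$, which is all the construction needs: a loose $l_m$ merely renders the induced lower bound $l^{T}Mz\ge l_m$ redundant against $z\in\mathcal{Z}$, so no feasible point is excised and exactness is preserved. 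I would dispatch the degenerate case $\rho<l_m$ separately, where the slack coefficient $\tfrac12(\rho-l_m)$ changes sign, the halfspace does not meet $M\mathcal{Z}$, and the result is the empty set (encoded as an infeasible HPZ); the boundary case $\rho=l_m$ collapses the slack harmlessly and is covered by the general formula.

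With the bound in hand I would prove the two inclusions. For $\mathcal{Z}\cap_M\mathcal{H}^{-}\subseteq\mathcal{Z}_H$, take $z\in\mathcal{Z}$ with $l^{T}Mz\le\rho$ and set $\xi_{c,\mathrm{new}}=\bigl(2\,l^{T}Mz-(\rho+l_m)\bigr)/(\rho-l_m)$; the chain $l_m\le l^{T}Mz\le\rho$ gives $\xi_{c,\mathrm{new}}\in[-1,1]$, the original factor assignment reproduces the first constraint block and the generator expansion of $z$ unchanged (the new zero generator contributes nothing), and by construction the appended equality holds, so $z\in\mathcal{Z}_H$. For the converse $\mathcal{Z}_H\subseteq\mathcal{Z}\cap_M\mathcal{H}^{-}$, take $z_h\in\mathcal{Z}_H$: the first constraint block together with the unchanged generator expansion certifies $z_h\in\mathcal{Z}$, while rearranging the appended equality gives $l^{T}Mz_h=\tfrac12(\rho+l_m)+\tfrac12(\rho-l_m)\,\xi_{c,\mathrm{new}}$, and $\xi_{c,\mathrm{new}}\le 1$ forces $l^{T}Mz_h\le\rho$, i.e.\ $Mz_h\in\mathcal{H}^{-}$. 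Combining the two inclusions yields $\mathcal{Z}\cap_M\mathcal{H}^{-}=\mathcal{Z}_H$, completing the proof.
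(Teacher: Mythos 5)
Your proposal is correct and follows essentially the same route as the paper's own proof: append one continuous slack factor with a zero generator column, encode $l^{T}Mz\le\rho$ as the equality $l^{T}Mz=\tfrac12(\rho+l_m)+\tfrac12(\rho-l_m)\xi_{f}$ using the triangle-inequality lower bound $l_m$, and verify the two inclusions. Your additional remarks — that $l_m$ need only be a valid (not tight) under-estimate because the polynomial monomials are not independently extremal, and the separate treatment of the degenerate cases $\rho\le l_m$ — are points the paper's proof leaves implicit, but they do not change the argument.
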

\begin{proof}
Our proof follows the lines of \cite{bird2023hybrid}, with additional refinements for the non-convex case. Let $\mathcal{Z}_H$ denote the HPZ given by the right-hand side of {\eqref{halfintersection}}. For any $z_h\in\mathcal{Z}_H$ there exists some $\xi_h\in\mathcal{B}_{\infty}^{n_{e}+1}\times\{-1,1\}^{n_{b}}$ such that 
\begin{align}
&\sum_{i=1}^{n_g} \bigg( \prod_{k=1}^{n_e} {\xi_{c_h,k}}^{E_{(k,i)}} \bigg)l^T MG_{c,(\cdot,i)}-  \xi_{f} 1/2(f-l_m) \nonumber\\&~~~~~~~~~~~+l^TMG_b\xi_{h_b}= (1/2(f+l_m)-l^TMc),\label{halfcon1}\\
    &\sum_{i=1}^{n^g} \bigg( \prod_{k=1}^{n^e} {\xi_{c_h,k}}^{R_{(k,i)}} \bigg) A_{c,(\cdot,i)}+A_b\xi_{b_h} = b,\label{halfcon2}
\end{align}
and
\begin{align}
&z_h=c  + G_b\xi_b+ \sum_{i=1}^{n_g} \bigg( \prod_{k=1}^{n_e} \xi_{c,k}^{E_{(k,i)}} \bigg) G_{c,(\cdot,i)}+\xi_f\mathbf{0}. \label{halfres}
\end{align}
Let $\xi_{c_h}=(\xi_c~\xi_f)$ and $\xi_{b_h}=\xi_b$ for $\xi_c\in {\rm I\!R}^{n_{e}}$, $\xi_f\in {\rm I\!R}$, and $\xi_b\in\{-1,1\}^{n_{b}}$. Then we have $z_h\in\mathcal{Z}$. Next, for
\begin{align*}
l^TMz_h&=l^TMc  + l^TMG_b\xi_b+ \sum_{i=1}^{n_g} \bigg( \prod_{k=1}^{n_e} \xi_{c,k}^{E_{(k,i)}} \bigg) l^TMG_{c,(\cdot,i)}\\&
= \xi_{f} 1/2(\rho-l_m) + 1/2(\rho+l_m).
\end{align*}
From the definition of $l_m$ in \eqref{lm} and that $\|\xi_f\|_{\infty}\leq1$, it follows that
\begin{small}
\protect\begin{equation}
\begin{split}
    &l^T Mz_h\in\\
    &\left[h^TMc-\sum_{i=1}^{n_{g}}\vert \bigg( \prod_{k=1}^{n_e} \xi_{c,k}^{E_{(k,i)}} \bigg) l^TMG_{c,(\cdot,i)}\vert -\sum_{i=1}^{n_{b}}\vert h^TMG_{b,(\cdot,i)}\vert , \rho \right],
\end{split}
\end{equation}
\end{small}therefore $Mz_h\in\mathcal{H}^{-}$ and $\mathcal{Z}_H\subseteq\mathcal{Z}\cap_M\mathcal{H}^{-}$. Conversely, for any point $z_h\in \mathcal{Z}\cap_R\mathcal{H}^{-}$ there exists some $\xi\in\mathcal{B}_{\infty}^{n_{e}}\times\{-1,1\}^{n_{b}}$ such that
\begin{align*}
    &\sum_{i=1}^{n_g} \bigg( \prod_{k=1}^{n_e} \xi_{c,k}^{R_{(k,i)}} \bigg) A_{c,(\cdot,i)}+A_b\xi_b = b,
\end{align*}
\begin{align*}
&z_h=c  + G_b\xi_b+ \sum_{i=1}^{n_g} \bigg( \prod_{k=1}^{n_e} \xi_{c,k}^{E_{(k,i)}} \bigg) G_{c,(\cdot,i)},
\end{align*}
and $l^T M z_h\leq \rho$. Thus $l^TMz_h\in[l_m,f]$ for some $l_m\leq l^TMz_h$ for all $z_h\in\mathcal{Z}\cap_M\mathcal{H}^{-}$. Choose 
\begin{small}
\begin{align*}
    l_m\!=\!h^TMc\!-\!\sum_{i=1}^{n_{g}}\vert \bigg( \prod_{k=1}^{n_e} \xi_{c,k}^{E_{(k,i)}} \bigg) l^TMG_{c,(\cdot,i)}\vert\! -\!\sum_{i=1}^{n_{b}}\vert h^TMG_{b,(\cdot,i)}\vert 
\end{align*}
\end{small}and let
\begin{small}
\begin{align*}
    u_m\!=\!h^TMc\!+\!\sum_{i=1}^{n_{g}}\vert \bigg( \prod_{k=1}^{n_e} \xi_{c,k}^{E_{(k,i)}} \bigg) l^TMG_{c,(\cdot,i)}\vert \!+\!\sum_{i=1}^{n_{b}}\vert h^TMG_{b,(\cdot,i)}\vert 
\end{align*}
\end{small}then $h^TM\mathcal{Z}_h\subseteq[l_m,u_m]$. Let $\xi_{c_h}=(\xi_c~\xi_{f})$ and $\xi_{b_h}=\xi_b$.
The above then implies that $\xi_h\in\mathcal{B}_{\infty}^{n_{e}+1}\times\{-1,1\}^{n_{b}}$ satisfies {\eqref{halfcon1}} and \eqref{halfcon2}, and $z_h\in\mathcal{Z}_H$. Therefore $\mathcal{Z}\cap_M\mathcal{H}^{-}\subseteq\mathcal{Z}_H$ and $\mathcal{Z}\cap_M\mathcal{H}^{-}=\mathcal{Z}_H$.   \hfill
\end{proof}

\begin{definition}\label{union}(Union) Given HPZs $\mathcal{Z}_1=\langle c_1,G_{c_1},G_{b_1},E_1,A_{c_1},A_{b_1},b_1,R_1\rangle_\mathcal{HPZ}$ and $\mathcal{Z}_2=\langle c_2,G_{c_2},G_{b_2},E_2,A_{c_2},A_{b_2},b_2,R_2\rangle_\mathcal{HPZ}$, the Union is: 
\begin{align}
& \mathcal{Z}_1 \cup \mathcal{Z}_2 = \bigg \langle {\hat{c}}, {\left[
G_{c_1} G_{c_2}  \mathbf{0}
\right]}, {\left[
G_{b_1} G_{b_2}  \hat{G}_b
\right]},{\begin{bmatrix} E_1 & \mathbf{0} & \mathbf{0}\\  \mathbf{0} & E_2 &\mathbf{0}\\  \mathbf{0} & \mathbf{0} & I \end{bmatrix}}, \nonumber \\& {\left[\begin{array}{cccc}
A_{c_1} & \mathbf{0} & \mathbf{0} & \mathbf{0}\\
\mathbf{0} & A_{c_2} & \mathbf{0}& \mathbf{0} \\
\mathbf{0}&\mathbf{0}& A_{c_3} & I
\end{array}\right]},  {\left[\begin{array}{ccc}
A_{b_1} & \mathbf{0} & \hat{A}_{b_1} \\
\mathbf{0} & A_{b_2} & \hat{A}_{b_2} \\
& A_{b_3} &
\end{array}\right]},\nonumber\\&~~~~~~~~~~~~{\left[\begin{array}{l}
\hat{b}_1 \\
\hat{b}_2 \\
b_3
\end{array}\right]}, {\begin{bmatrix} R_1 & \mathbf{0} & \mathbf{0}& \mathbf{0}&\mathbf{0}\\  \mathbf{0} & R_2 &\mathbf{0}& \mathbf{0}&\mathbf{0}\\  \mathbf{0} & \mathbf{0} & E_1& \mathbf{0}&\mathbf{0}\\ \mathbf{0} &\mathbf{0} &\mathbf{0} &E_2&\mathbf{0}\\\mathbf{0}&\mathbf{0}&\mathbf{0}&\mathbf{0}&I\end{bmatrix}} \bigg \rangle_\mathcal{HPZ}, \label{unionhpz}
\end{align}
\begin{align*}
&b_3=\left[\begin{array}{cccccccc}
\frac{1}{2} \mathbf{1} & \frac{1}{2} \mathbf{1} & \frac{1}{2} \mathbf{1} & \frac{1}{2} \mathbf{1} & \mathbf{0} & \mathbf{1} & \mathbf{0} & \mathbf{1}
\end{array}\right]^T,\\
&A_{3_c}=\left[\begin{array}{cccccccc}
I & -I & \mathbf{0} & \mathbf{0} & \mathbf{0} & \mathbf{0} & \mathbf{0} & \mathbf{0} \\
\mathbf{0} & \mathbf{0} & I & -I & \mathbf{0} & \mathbf{0} & \mathbf{0} & \mathbf{0}
\end{array}\right]^T,\\
&\begin{small}A_{3_b}=\left[\begin{array}{cccccccc}
\mathbf{0} & \mathbf{0} & \mathbf{0} & \mathbf{0} & \frac{1}{2} I & -\frac{1}{2} I & \mathbf{0} & \mathbf{0} \\
\mathbf{0} & \mathbf{0} & \mathbf{0} & \mathbf{0} & \mathbf{0} & \mathbf{0} & \frac{1}{2} I & -\frac{1}{2} I \\
\frac{1}{2} \mathbf{1} & \frac{1}{2} \mathbf{1} & -\frac{1}{2} \mathbf{1} & -\frac{1}{2} \mathbf{1} & \frac{1}{2} \mathbf{1} & \frac{1}{2} \mathbf{1} & -\frac{1}{2} \mathbf{1} & -\frac{1}{2} \mathbf{1}
\end{array}\right]^T.
\end{small}
\end{align*}
where given the vectors $\hat{G}_b \in \mathbb{R}^n, \hat{c} \in \mathbb{R}^n, \hat{A}_{b_1} \in \mathbb{R}^{n_{c_1}}, \hat{b}_1 \in \mathbb{R}^{n_{c_1}}, \hat{A}_{b_2} \in \mathbb{R}^{n_{c_2}}$, and $\hat{b}_2 \in \mathbb{R}^{n_{c_2}}$ such that
\begin{align*}
  &  \left[\begin{array}{cc}
I & I \\
-I & I
\end{array}\right]\left[\begin{array}{c}
\hat{G}_b \\
\hat{c}
\end{array}\right]=\left[\begin{array}{c}
G_{b_2} \mathbf{1}+c_1 \\
G_{b_1} \mathbf{1}+c_2
\end{array}\right],\\&\left[\begin{array}{cc}
-I & I \\
I & I
\end{array}\right]\left[\begin{array}{c}
\hat{A}_{b_1} \\
\hat{b}_1
\end{array}\right]=\left[\begin{array}{c}
b_1 \\
-A_{b_1} \mathbf{1}
\end{array}\right],\\&\left[\begin{array}{cc}
-I & I \\
I & I
\end{array}\right]\left[\begin{array}{c}
\hat{A}_{b_2} \\
\hat{b}_2
\end{array}\right]=\left[\begin{array}{c}
-A_{b_2} \mathbf{1} \\
b_2
\end{array}\right] .
\end{align*}
\end{definition}
\begin{proof}
Our proof follows the lines of \cite{bird2021unions}, with additional refinements for the non-convex case. Let $\mathcal{Z}_U=\left\langle c_{u},G_{c_u}, G_{b_u},E_u,  A_{c_u}, A_{b_u}, b_{u},R_u\right\rangle_\mathcal{HPZ}$ denote the HPZ given by \eqref{unionhpz}. For any $z_u \in \mathcal{Z}_U$ there exists some $\xi_{c_u} \in$ $\mathcal{B}_{\infty}^{n_{e_u}}$ and $\xi_{b_u} \in\{-1,1\}^{n_{b_u}}$ such that
\begin{align*}
A_{b_u}\xi_{b_u}+\nonumber\sum_{i=1}^{n_{g_u}} \bigg( \prod_{k=1}^{n_{e_u}} {\xi_{c_u,k}}^{R_{u,(k,i)}} \bigg) A_{c_u,(\cdot,i)} = b_u
\end{align*}
 and
 \begin{align*}
     z_u=c_u +G_{b_u}\xi_{b_u}+ \sum_{i=1}^{n_{g_u}} \bigg( \prod_{k=1}^{n_{e_u}} {\xi_{c_u,k}}^{E_{u,(k,i)}} \bigg) G_{c_u,(\cdot,i)} ~
 \end{align*}
 Let $\xi_{c_u}=\left(\xi_{c_1} ~ \xi_{c_2} ~ \xi_{c_r}\right)$, where $\xi_{c_1} \in$ $\mathbb{R}^{n_{e_1}}, \xi_{c_2} \in \mathbb{R}^{n_{e_2}}$, and $\xi_{c_r} \in \mathbb{R}^{2\left(n_{e_1}+n_{e_2}+n_{b_1}+n_{b_2}\right)}$, and $\xi_{b_u}=$ $\left(\xi_{b_1}~ \xi_{b_2}~ \xi_{b_r}\right)$, where $\xi_{b_1} \in\{-1,1\}^{n_{ b_1}}, \xi_{b_2} \in\{-1,1\}^{n_{b_2}}$, and $\xi_{b_r} \in$ $\{-1,1\}$. Expanding the equality constraints constrained by $A_{c_3}$, $A_{b_3}$, and $b_3$ gives
 \begin{subequations} \label{factorcon}
\begin{align} \label{factorcon1}
&\xi_{c_1}  =1 / 2 \mathbf{1}-1 / 2 \xi_{b_r}-\xi_{c_r, 1}=-1 / 2 \mathbf{1}+1 / 2 \xi_{b_r}+\xi_{c_r, 2},  \\
\label{factorcon2}&\xi_{c_2}  =1 / 2 \mathbf{1}+1 / 2 \xi_{b_r}-\xi_{c_r, 3}=-1 / 2 \mathbf{1}-1 / 2 \xi_{b_r}+\xi_{c_r, 4},  \\
\label{factorcon3}&1 / 2 \xi_{b_1}  =-1 / 2 \xi_{b_r}-\xi_{c_r, 5}=-\mathbf{1}+1 / 2 \xi_{b_r}+\xi_{c_r, 6},  \\
\label{factorcon4}&1 / 2 \xi_{b_2}  =1 / 2 \xi_{b_r}-\xi_{c_r, 7}=-\mathbf{1}-1 / 2 \xi_{b_r}+\xi_{c_r, 8},
\end{align}
\end{subequations}
where $\xi_{c_r}=\left(\xi_{c_r, 1} ~ \cdots ~ \xi_{c_r, 8}\right)$. Letting $\xi_{b_r}=1$, \eqref{factorcon} reduces to
\begin{subequations}\label{conxi}
\begin{align} 
& \xi_{c_1}=-\xi_{c_r, 1}=\xi_{c_r, 2},  \\
& \xi_{c_2}=\mathbf{1}-\xi_{c_r, 3}=-\mathbf{1}+\xi_{c_r, 4},  \\
& \xi_{b_1}=-\mathbf{1}-2 \xi_{c_r, 5}=-\mathbf{1}+2 \xi_{c_r, 6},  \\
& \xi_{b_2}=\mathbf{1}-2 \xi_{c_r, 7}=-3 \mathbf{1}+2 \xi_{c_r, 8}. 
\end{align}
\end{subequations}
Given that $\left\|\xi_{c_r}\right\|_{\infty} \leq 1$, \eqref{factorcon2} and \eqref{factorcon4} are only satisfied for $\xi_{c_2}=\mathbf{0}$ and $\xi_{b_2}=-\mathbf{1}$ respectively, while \eqref{factorcon1} and \eqref{factorcon3} are satisfied for any $\left\|\xi_{c_1}\right\|_{\infty} \leq 1$ and $\xi_{b_1} \in\{-1,1\}^{n_{b_1}}$. Let $\xi_{c_u}=$ $\left(\xi_{c_1}~ \mathbf{0} ~\xi_{c_r}\right)$ and $\xi_{b_u}=\left(\xi_{b_1}~-\mathbf{1}~ 1\right)$. Expanding $z_u$ gives
 \begin{align}
    z_u=&\hat{c}+\hat{G}^{b}+G_{b_1} \xi_{b_1}c_1+\sum_{i=1}^{n_{g_1}} \bigg( \prod_{k=1}^{n_{e_1}} {\xi_{c_1,k}}^{E_{1,(k,i)}} \bigg) G_{c_1,(\cdot,i)} \nonumber\\&-G_{b_2} \mathbf{1} +\sum_{i=1}^{n_{g_2}} \mathbf{0} G_{c_2,(\cdot,i)} +\sum_{i=1}^{n_{g_r}} \bigg( \prod_{k=1}^{n_{e_r}} {\xi_{c_r,k}}^{I_{(k,i)}} \bigg) \mathbf{0}, \label{conres}
 \end{align}
and, after substituting $-G_{b_2} \mathbf{1}+\hat{G}_{b}+\hat{c}=c_{1}$, reduces to
 \begin{align}
    & z_u=c_1+G_{b_1} \xi_{b_1}+ \sum_{i=1}^{n_{g_1}} \bigg( \prod_{k=1}^{n_{e_1}} {\xi_{c_1,k}}^{E_{1,(k,i)}} \bigg) G_{c_1,(\cdot,i)}.
 \end{align}
Expanding the equality constraints constrained by the first two rows of the $A_{c_u}$, $A_{b_u}$ and $b_u$ results in
\begin{align}\label{coneq1}
&\sum_{i=1}^{n_{g_1}} \bigg( \prod_{k=1}^{n_{e_1}} {\xi_{c_1,k}}^{R_{1,(k,i)}} \bigg) A_{c_1,(\cdot,i)}+A_{b_1} \xi_{b_1}+\hat{A}_{b_1}  =\hat{b}_{1} ,
\end{align} 
\begin{align}\label{coneq2}
\sum_{i=1}^{n_{g_2}} \mathbf{0}A_{c_2,(\cdot,i)}-A_{b_2} \mathbf{1}+\hat{A}_{b_2}  =\hat{b}_{2},
\end{align}
which, after substituting $\hat{b}_{1}-\hat{A}_{b_1}=b_{1}$ and $\hat{b}_{2}-\hat{A}_{b_2}=$ $-A_{2}^{b} \mathbf{1}$, gives $-A_{b_2} \mathbf{1}=-A_{b_2} \mathbf{1}$ and 
\begin{align}\label{constraints}
&\sum_{i=1}^{n_{g_m}} \bigg( \prod_{k=1}^{n_{e_m}} {\xi_{c_1,k}}^{R_{m,(k,i)}} \bigg) {A}_{c_m,(\cdot,i)}+A_{b_m} \xi_{b_m} ={b}_{m}, 
\end{align} with $m=1$. Combining \eqref{conxi} results in $z_u \in \mathcal{Z}_{1}$ for $\xi_{b_r}=1$.
Similarly, let $\xi_{b_r}=-1$ we have
\begin{align}
    & z_u=c_2+G_{b_2} \xi_{b_2}+ \sum_{i=1}^{n_{g_2}} \bigg( \prod_{k=1}^{n_{e_2}} {\xi_{c_2,k}}^{E_{2,(k,i)}} \bigg) G_{c_2,(\cdot,i)}
 \end{align}
and constraint \eqref{constraints} is satisfied with $m=2$. Which means $z_u \in \mathcal{Z}_{2}$ for $\xi_{b_r}=-1$. Given that $\xi_{b_r} \in\{-1,1\}$ and that the choice of $z_u \in \mathcal{Z}_U$ is arbitrary, $\mathcal{Z}_U \subseteq \mathcal{Z}_{1} \cup \mathcal{Z}_{2}$.
Conversely, for any $z_m \in \mathcal{Z}_{m}$, where $m=1,2$, there exists some $\xi_{c_m} \in \mathcal{B}_{\infty}^{n_{e_m}}$ and $\xi_{b_m} \in\{-1,1\}^{n_{b_m}}$ such that constraints \eqref{constraints} are satisfied and
\begin{align}\label{zonotope}
    & z_m=c_m+G_{b_m} \xi_{b_m}+ \sum_{i=1}^{n_{g_m}} \bigg( \prod_{k=1}^{n_{e_m}} {\xi_{c_m,k}}^{E_{m,(k,i)}} \bigg) G_{c_m,(\cdot,i)}.
 \end{align}
For $m=1$, letting $\xi_{c_u}=\left(\xi_{c_1} ~\mathbf{0} ~\xi_{c_r}\right)$ and $\xi_{b_u}=\left(\xi_{b_1}~-\mathbf{1} ~1\right)$, \eqref{coneq1} and \eqref{coneq2} are satisfied and \eqref{conxi} implies that $\left\|\xi_{c_u}\right\|_{\infty} \leq 1$. Applying \eqref{conres} then gives $z_1\in \mathcal{Z}_U$. For $m=2$, letting $\xi_{c_u}=\left(\mathbf{0}~ \xi_{c_2} ~\xi_{c_r}\right)$ and $\xi_{b_u}=\left(-\mathbf{1} ~\xi_{b_2}~-1\right)$, similar to $m=1$ then gives $z_2 \in \mathcal{Z}_U$. Given that the choice of $z_1 \in \mathcal{Z}_{1}$ and $z_2 \in \mathcal{Z}_{2}$ is arbitrary, $\mathcal{Z}_{1} \cup \mathcal{Z}_{2} \subseteq \mathcal{Z}_U$ and therefore $\mathcal{Z}_{1} \cup \mathcal{Z}_{2}=\mathcal{Z}_U$.   \hfill
\end{proof}
\section{Forward Propagation of Hybrid Nonaffine Systems}\label{sec-propPWNA}

This section addresses the reachability analysis of discrete-time hybrid nonaffine systems with linear guards. 
\subsection{Problem Statement}
 We consider a system described by the following state equation:
\begin{equation}\label{eq:system}
x(k+1) = \begin{cases}
f_1(x(k), u(k)) & \text{if } \delta_1(k) = 1, \\
\vdots & \\
f_s(x(k), u(k)) & \text{if } \delta_s(k) = 1,
\end{cases}
\end{equation}

where $x(k) \in \mathbb{R}^{n_x}$ represents the system state, $u(k) \in \mathcal{U}_k \subset \mathbb{R}^{n_u}$ is the control input bounded by an input zonotope $\mathcal{U}_k$. The switching signal can be represented by binary variables $\delta_i(k) \in \{0,1\}$ indicating the active operating mode, which satisfy the exclusive-or condition:
\begin{equation}
\sum_{i=1}^s \delta_i(k) = 1.
\end{equation}

For each $j \in S$, the function $f_j: \mathbb{R}^{n_x} \times \mathbb{R}^{n_u} \rightarrow \mathbb{R}^{n_x}$ is an nonlinear nonaffine function. The initial state is bounded by the hybrid polynomial zonotope $\mathcal{X}_0=\langle c_{x_0},G_{c_{x_0}},G_{b_{x_0}},E_{x_0},A_{c_{x_0}},A_{b_{x_0}},b_{x_0},R\rangle_\mathcal{HPZ} \subset \mathbb{R}^{n_x}$.

For the system to be well-posed, the state space $\mathcal{C}$ must be partitioned into $s$ disjoint regions satisfying:
\begin{equation}
\mathcal{C}_i \cap \mathcal{C}_j = \emptyset, \quad \forall i \neq j, \quad \text{ and } \quad \bigcup_{i=1}^s \mathcal{C}_i = \mathcal{C}.
\end{equation}

Each region $\mathcal{C}_i$ is defined as a polyhedral set given by linear inequalities:
\begin{equation}
\mathcal{C}_i = \{x \in \mathbb{R}^{n_x} \mid L_i x \leq \rho_i\},
\end{equation}
where $L_i \in \mathbb{R}^{m_i \times n_x}$ and $\rho_i \in \mathbb{R}^{m_i}$. The binary variable $\delta_i(k) = 1$ if and only if $x(k) \in \mathcal{C}_i$.

The system exhibits both discrete switching behavior governed by the state-space partitioning and nonlinear continuous dynamics represented by the nonaffine functions $f_j$. This combination presents significant challenges for traditional reachability analysis methods, which typically rely on linear or affine approximations of system dynamics.



\subsection{Reachability Computation via Set Propagation}
We now present the algorithmic framework used in~\cite{xie2025data} for computing reachable sets of the hybrid nonaffine system~\eqref {eq:system}. The approach leverages HPZ representation to handle both discrete switching and nonlinear dynamics. At each time step $k$, the reachable set is represented as:
\begin{equation}
\mathcal{R}_k = \langle c, G_c, G_b, E, A_c, A_b, b, R \rangle_{\mathcal{HPZ}} \subset \mathbb{R}^{n_x}.
\end{equation}

The reachability computation proceeds iteratively through three fundamental operations:

\subsubsection{Mode-Dependent Set Partitioning}

Given $\mathcal{R}_k$, we compute the intersection with each state-space region $\mathcal{C}_i$:

\begin{equation}
\mathcal{R}_{k,i} = \mathcal{R}_k \cap \mathcal{C}_i = \text{halfspace}(\mathcal{R}_k, L_i, \rho_i), \quad \forall i \in S
\label{eq:intersection}
\end{equation}

where the halfspace operation \ref{halfspaceintersection} preserves the HPZ structure while enforcing the linear constraints $L_i x \leq \rho_i$.

\subsubsection{Nonaffine Forward Propagation}

For each non-empty partition $\mathcal{R}_{k,i} \neq \emptyset$, we compute:

\begin{equation}
\mathcal{R}^i_{k+1} = f_i(\mathcal{R}_{k,i} \times \mathcal{U}_k), 
\label{eq:propagation}
\end{equation}

where $f_i(\mathcal{R}_{k,i} \times \mathcal{U}_k)$ denotes the image of the Cartesian product under the nonaffine map $f_i$.

\begin{remark}
The computation of \eqref{eq:propagation} employs polynomial enclosure techniques combined with interval arithmetic to obtain tight over-approximations while maintaining the HPZ representation.
\end{remark}

\subsubsection{Set Union Operation}

The reachable set at time $k+1$ is obtained by  the union operation for HPZ \ref{union}:

\begin{equation}
\mathcal{R}_{k+1} = \bigcup_{i \in S} \mathcal{R}^i_{k+1}.
\label{eq:union}
\end{equation}
And we calculate the next step with $\mathcal{R}_{k+1}$.
\begin{algorithm}[t]
\caption{Reachability Analysis for Hybrid Nonaffine Systems}
\label{alg:reachability}
\begin{algorithmic}[1]
\Require Initial set $\mathcal{X}_0$, input sets $\{\mathcal{U}_k\}_{k=0}^{N-1}$, horizon $N$
\Ensure Reachable sets $\{\mathcal{R}_k\}_{k=0}^{N}$
\State $\mathcal{R}_0 = \mathcal{X}_0$
\For{$k = 0, 1, \ldots, N-1$}
    \For{$i \in S$}
        \State $\mathcal{R}_{k,i} = \text{halfspace}(\mathcal{R}_k, L_i, \rho_i)$ 
        \If{$\mathcal{R}_{k,i} \neq \emptyset$}
            \State $\mathcal{R}^i_{k+1} = f_i(\mathcal{R}_{k,i} \times \mathcal{U}_k)$
        \EndIf
    \EndFor
    \State $\mathcal{R}_{k+1} = \bigcup^s_{i =1} \mathcal{R}^i_{k+1}$
\EndFor
\State \Return $\{\mathcal{R}_0, \mathcal{R}_1, \ldots, \mathcal{R}_N\}$
\end{algorithmic}
\end{algorithm}




\section{Numerical examples}\label{sec-numericalEx}


%

All numerical experiments presented in this work were conducted using MATLAB R2024b on Microsoft Windows. The computational platform features an Intel Core i7-12800HX processor (12th Gen) with a base frequency of 4000 MHz and 16 GB of physical memory, providing sufficient computational resources for the reachability analysis of hybrid systems.

The implementation leverages the COntinuous Reachability Analyzer (CORA) toolbox version 2025~\cite{althoff2015introduction}, which provides comprehensive support for reachability analysis of hybrid systems with various set representations including zonotopes, polynomial zonotopes, and the newly introduced hybrid polynomial zonotopes.


To demonstrate the effectiveness of hybrid polynomial zonotopes in handling nonlinear hybrid systems, we consider a discrete-time piecewise nonaffine (PWNA) system with quadratic dynamics in each mode. The system dynamics are given by:
\begin{equation}\label{eq:pwna-dynamics}
    x_{k+1} = \begin{cases}
        f_1(x_k) = x_k^T M_1 x_k + A_1 x_k + d, & \text{if } x_{1,k} \leq 0, \\
        f_2(x_k) = x_k^T M_2 x_k + A_2 x_k + d, & \text{otherwise},
    \end{cases}
\end{equation}
where $x_k = [x_{1,k}, x_{2,k}]^T \in \mathbb{R}^2$ denotes the state at time step $k$. The system parameters are:
\begin{align}
&M_1 = \begin{bmatrix}
    0.017 & -0.0028 \\
    0 & 0.017
\end{bmatrix}, \quad
M_2 = \begin{bmatrix}
    -0.1 & 0 \\
    0 & -0.1
\end{bmatrix}, \label{eq:quadratic-matrices}\\
&A_1 = \begin{bmatrix}
    0.75 & 0.25 \\
    -0.25 & 0.75
\end{bmatrix},
A_2 = \begin{bmatrix}
    0.75 & -0.25 \\
    0.25 & 0.75
\end{bmatrix}, d = \begin{bmatrix}
    0.25 \\
    -0.5
\end{bmatrix}. \label{eq:linear-matrices} 
\end{align}

This system exhibits several challenging characteristics for reachability analysis:
\begin{enumerate}
    \item Quadratic Nonlinearity: Each mode contains a quadratic term $x^T M_i x$ that introduces state-dependent nonlinear behavior, causing the reachable sets to deform in complex ways.
    \item Mode Switching: The linear guard condition $\mathcal{G} = \{x \in \mathbb{R}^2 : x_1 \leq 0\}$ partitions the state space, requiring accurate computation of guard intersections.
    \item Coupled Dynamics: The matrices $A_1$ and $A_2$ represent rotational transformations with scaling, while $M_1$ introduces expansion and $M_2$ introduces contraction, creating rich dynamical behavior.
\end{enumerate}

For the reachability analysis, we initialize the system with a zonotopic initial set:
\begin{equation}\label{eq:initial-set}
    \mathcal{Z}_0 = \left\langle c_0, G_0 \right\rangle = \left\langle \begin{bmatrix}
        -0.201 \\
        0.96
    \end{bmatrix}, \begin{bmatrix}
        0.2 & 0 \\
        0 & 0.2
    \end{bmatrix} \right\rangle,
\end{equation}
representing a square region with center $c_0$ and half-edge length 0.2.
\begin{figure}[htbp]
    \centering
    \includegraphics[width=0.48\textwidth]{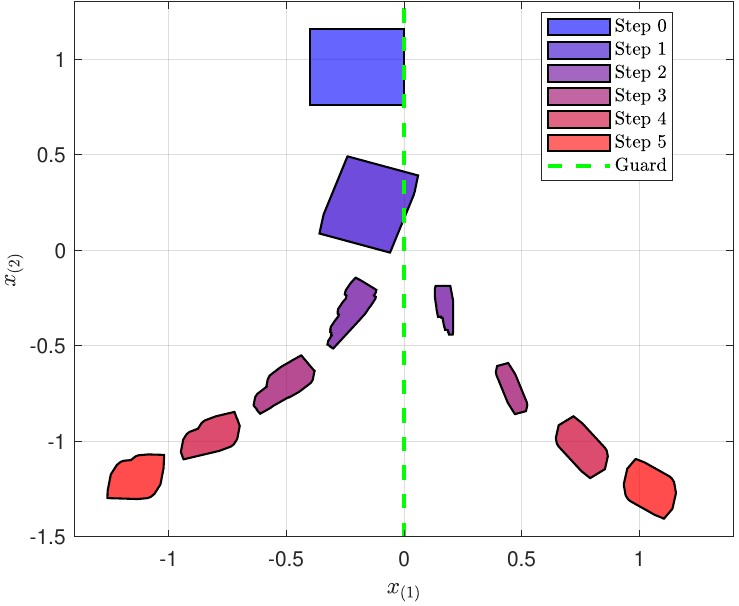}
    \caption{Reachable set evolution for PWNA system over 5 time steps. Color progression tracks temporal evolution; green line marks switching boundary at $x_1 = 0$.}
    \label{fig:pwna_reachable_sets}
\end{figure}

Figure~\ref{fig:pwna_reachable_sets} demonstrates HPZ's ability to handle mode transitions and nonlinear dynamics efficiently. The algorithm precisely captures intersection operations at the guard boundary without excessive over-approximation. Mode 2's contractive quadratic dynamics produce complex nonconvex geometries that traditional convex representations would poorly approximate. The complete analysis requires only 2.12 seconds, confirming HPZ's computational efficiency for hybrid nonlinear systems.

\section{Conclusion}
\label{sec-conclusion}

This paper introduces Hybrid Polynomial Zonotopes, a novel set representation that successfully enables precise and exact representation of reachable sets for hybrid nonlinear nonaffine systems. HPZ addresses the limitation of existing zonotope methods, which cannot accurately describe inherently non-convex reachable sets within a unified representation framework, instead relying on unions of convex components. HPZ combines the mode-dependent structure of hybrid zonotopes with polynomial expressiveness, providing a closed-form representation under essential set operations including Minkowski sum, linear transformation, intersection, and union. We developed comprehensive mathematical frameworks and computational algorithms for HPZ operations, demonstrating superior accuracy and efficiency compared to existing approaches. Future work will explore HPZ applications in controller synthesis and extend the framework to stochastic hybrid systems.

\bibliographystyle{IEEEtran} 
\bibliography{BibTex_2025}

\end{document}